% LLNCS macro package for Springer Computer Science proceedings;
% Version 2.20 of 2017/10/04
%
\documentclass[runningheads]{llncs}
\usepackage{graphicx}
% Used for displaying a sample figure. If possible, figure files should
% be included in EPS format.
%
% If you use the hyperref package, please uncomment the following line
% to display URLs in blue roman font according to Springer's eBook style:
% \renewcommand\UrlFont{\color{blue}\rmfamily}

\spnewtheorem*{conjecture*}{Conjecture}{\itshape}{\rmfamily}

\begin{document}
\title{On Conflict-Free Replicated Data Types and Equivocation in Byzantine Environments
\thanks{The work of Hannes Hartenstein was supported by the Helmholtz Association. The work of Florian Jacob and Saskia Bayreuther was partially supported by the Carl Zeiss Foundation.}
%\thanks{The work of some authors was supported by an organization.}
}
\titlerunning{On CRDTs and Equivocation in Byzantine Environments}
% If the paper title is too long for the running head, you can set
% an abbreviated paper title here
%
\author{Florian Jacob\orcidID{0000-0002-5739-8852} \and
Saskia Bayreuther\orcidID{0000-0001-5197-0507} \and
Hannes Hartenstein\orcidID{0000-0003-3441-3180}}
%\author{Anonymous\orcidID{0000-0000-0000-0000} \and
%Anonymous\orcidID{0000-0000-0000-0000} \and
%Anonymous\orcidID{0000-0000-0000-0000} \and Anonymous\orcidID{0000-0000-0000-0000}
%}
%
\authorrunning{F. Jacob et al.}
%\authorrunning{Anonymous Authors}
\institute{Karlsruhe Institute of Technology (KIT), Germany
%\institute{Anonymous Institution
\\
\email{\{florian.jacob,saskia.bayreuther,hannes.hartenstein\}@kit.edu}}
%\email{anonymous@example.org}
%}
%
\maketitle              % typeset the header of the contribution
\begin{abstract}
We explore the property of \textit{equivocation tolerance} for Conflict-Free Replicated Data Types (CRDTs).
We show that a subclass of CRDTs is equivocation-tolerant and can thereby cope with any number of Byzantine faults:
Without equivocation detection, prevention or remediation, they still fulfill strong eventual consistency (SEC).
We also conjecture that there is only one operation-based CRDT design supporting non-commutative operations that fulfills SEC in Byzantine environments with any number of faults.

\keywords{Equivocation Tolerance \and Non-Equivocation \and Conflict-Free Replicated Data Types \and Byzantine Fault Model \and Omission Fault Model}
\end{abstract}
\section{Introduction}

As the name suggests, Conflict-free Replicated Data Types (CRDTs) provide powerful properties: in particular, updates can be applied without further coordination of replicas, and recovery from network partitions can be done with ease.
While crash-fault environments are typically assumed for CRDTs, it is natural to investigate whether or under which conditions these desired CRDT properties also hold in Byzantine environments.

Recent works on CRDTs in Byzantine environments have followed different paths.
The work stretches from classical assumptions of an honest two-thirds majority~\cite{zhao2016optimistic} introducing coordination (e.g., Byzantine-tolerant causal-order broadcast~\cite{auvolat2021byzantine}) to coordination-free, sybil-resistant CRDTs using broadcast based on the happened-before relation as directed, acyclic graphs~\cite{kleppmann2020bec,jacob2021analysis}.

In this brief announcement, we relate the notion of equivocation to CRDTs and show under which conditions a subclass of CRDTs is equivocation-tolerant in Byzantine environments.
We show that, due to equivocation tolerance, all state-based and certain operation-based CRDTs tolerate any number of Byzantine faults while depending on rather mild assumptions on the communication layer.

\emph{Equivocation} is the act of a Byzantine replica sending different updates, which appear valid on their own, to different recipients where it should have sent the same update~\cite{chun2007attested}.
While equivocation can only be detected globally or with two equivocated updates, a \emph{valid update} is an update that is protocol-conforming when viewed on its own.
\emph{Omission} occurs if a crash- or Byzantine-faulty replica sends an update to only a strict subset of all protocol-intended recipients.

A system provides \emph{Non-Equivocation} if it prevents Byzantine replicas from performing Equivocation.
We say that an algorithm is \emph{equivocation-tolerant} if it neither needs to detect, prevent, nor remedy equivocation to ensure its provided guarantees beyond what is needed to cope with omission.
It follows that correct replicas need to be able to locally detect an invalid update.

\section{CRDTs and Equivocation Tolerance}

CRDTs provide \emph{Strong Eventual Consistency (SEC)}, consisting of \emph{Strong Convergence} (“correct replicas that have delivered the same updates have an equivalent state”), \emph{Eventual Delivery} (“an update delivered at some correct replica is eventually delivered to all correct replicas”), and \emph{Termination} (“All method executions terminate”)~\cite{shapiro2011conflict}.
Equivocation mainly threatens Strong Convergence: either the notion of which updates are the same, or the application order of updates can be equivocated.
Intuitively, \emph{equivocation tolerance} for CRDTs means that validity of updates is defined in a way that \emph{any two valid updates are conflict-free}, which allows applying \emph{any} valid update directly, without threatening Strong Convergence.
As uncoordinated update application is then sufficient, threats to Termination like coordinated equivocation detection and remediation can be avoided.
We argue that CRDTs are equivocation-tolerant if they remain conflict free when facing any number of byzantine faults.
With equivocation tolerance, only omission faults remain, threatening Eventual Delivery.

Below, we show that all state-based CRDTs and a subclass of operation-based CRDTs are equivocation-tolerant.
Consequently, an arbitrary number of Byzantine replicas cannot harm SEC.
We assume authenticated channels between a static group of replicas participating in the CRDT, and that all correct replicas form a connected component in the communication graph.

\subsection{State-based CRDTs and Equivocation Tolerance}

\begin{theorem}
\label{th:state-based-equivocation-tolerance}
All state-based CRDTs provide equivocation tolerance.
\end{theorem}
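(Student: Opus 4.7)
The plan is to reduce the claim directly to the defining algebraic structure of state-based CRDTs: every replica's local state lives in a join-semilattice, and the replicas' merge operation is the semilattice join, which is commutative, associative, and idempotent. First I would make precise what a \emph{valid update} is in this setting. In a state-based CRDT an update delivered to a correct replica is simply an element of the semilattice announced by some sender; validity can be checked locally by confirming membership in the state space (plus any authentication of the sender on the channel). Equivocation, then, amounts to a Byzantine replica announcing two different valid semilattice elements to different correct recipients under the same identity.

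Next I would show that any two valid updates are conflict-free in the sense required by the informal definition of equivocation tolerance given in the excerpt. Because merge is a join, any finite multiset of received semilattice elements has a unique least upper bound, independent of the order in which the elements are merged and of how many times each is merged in. A correct replica may therefore apply each valid state it receives immediately, without having to decide whether some earlier delivery from the same sender was the ``real'' one, and without having to undo or reconcile anything if a contradictory state later appears.

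From here Strong Convergence follows routinely. Under the standing assumption that correct replicas form a connected component and continue to exchange states, every equivocated element that reaches any correct replica eventually reaches every other correct replica. Idempotence and commutativity of the join then force all correct replicas that have delivered the same set of states (in particular, once propagation has caught up) to be in equivalent states, which is exactly Strong Convergence; Termination is immediate because merge is a single local operation. Eventual Delivery is the only SEC component still exposed, and it is threatened only by omission, which lies outside the scope of equivocation tolerance as defined.

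The main obstacle, as far as I can see, is not analytical but definitional: one has to argue convincingly that in the state-based model ``equivocation'' really does collapse to ``two semilattice elements bearing the same origin,'' and that no implicit assumption about a unique update history per sender is being used. Once this framing is fixed, the semilattice axioms subsume conflict-freeness and the theorem is essentially a restatement of the standard CvRDT convergence argument, now read under a Byzantine sender rather than a crash-faulty one.
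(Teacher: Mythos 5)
Your proposal is correct and follows essentially the same route as the paper's proof: validity of an update is local membership in the join-semilattice, the commutativity and idempotence of the join make any two valid updates conflict-free regardless of merge order, and an equivocation therefore collapses to several independent updates that can be treated as omissions. The additional material on Strong Convergence and Eventual Delivery belongs to the subsequent lemma rather than to this theorem, but it does not detract from the argument.
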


\begin{proof}
State-based CRDTs are based on defining a join-semilattice made of all valid states.
They only send their current state without metadata, which means that an update is valid if and only if it is part of the semilattice, which is locally verifiable.
Due to the commutativity of the join relation and the partial order of the semilattice, any two valid updates cannot conflict with each other, as both can be merged in an arbitrary order with the same result.
An equivocation consisting of $d$ differing updates can thereby be treated as $d$ independent updates for which omission has occurred.
\qed
\end{proof}

\begin{lemma}
\label{lm:state-based-Byzantine-tolerance}
State-based CRDTs ensure Strong Eventual Consistency for all correct replicas in an environment with $n$ replicas of which an arbitrary number $f$ exhibit Byzantine faults, i.e., have fault tolerance $n > f$.
\end{lemma}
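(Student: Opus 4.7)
The plan is to decompose Strong Eventual Consistency into its three constituent properties (Strong Convergence, Eventual Delivery, and Termination) and show each one holds despite $f$ Byzantine replicas, leveraging Theorem~\ref{th:state-based-equivocation-tolerance} to handle equivocation and the communication-layer assumption that the correct replicas form a connected subgraph.

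First I would reduce the fault landscape. By Theorem~\ref{th:state-based-equivocation-tolerance}, any equivocation by a Byzantine replica is indistinguishable from multiple independent valid updates subject to omission, so equivocation collapses into the omission case. The remaining Byzantine misbehavior is either (i) sending a state that is not an element of the join-semilattice, or (ii) refusing to send or relay. Case (i) is locally filterable since membership in the semilattice is locally verifiable, so such messages can simply be discarded by each correct replica without affecting its state. Case (ii) is exactly omission.

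Next I would establish Strong Convergence. Once all Byzantine misbehavior is reduced to omission of valid semilattice elements, any two correct replicas that have delivered the same set of (valid) states compute the same join by commutativity, associativity, and idempotency of the semilattice join; hence their states are equivalent, independently of $f$. For Termination, I would note that each local operation on a state-based CRDT (query, local update, receive-and-merge) is a finite computation on the semilattice, so $f$ has no effect on termination at correct replicas.

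The main obstacle is Eventual Delivery, because a Byzantine replica could refuse to forward states from correct replicas it neighbors. Here I would invoke the assumption that the correct replicas form a connected component: a standard state-based anti-entropy argument along this subgraph suffices, since the monotonicity of the join-semilattice ensures that once any correct replica has merged a state $s$, every subsequent exchange with a correct neighbor transmits information at least as large as $s$, so $s$ propagates transitively to every correct replica in finitely many anti-entropy rounds. States originating at a Byzantine replica that have been accepted by at least one correct replica are delivered by the same argument, while states that no correct replica ever accepted are vacuously outside the scope of Eventual Delivery. Combining the three arguments yields SEC for arbitrary $f \le n$, matching the claimed fault tolerance $n > f$.
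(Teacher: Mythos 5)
Your proposal is correct and follows essentially the same route as the paper: reduce equivocation to omission via Theorem~\ref{th:state-based-equivocation-tolerance}, then secure Eventual Delivery through state gossip propagating transitively along correct replicas, with the full state carrying all previously merged updates. You are merely more explicit than the paper about the SEC decomposition, the local filtering of non-semilattice states, and the connected-component assumption, all of which the paper leaves implicit.
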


\begin{proof}
Due to Theorem~\ref{th:state-based-equivocation-tolerance}, we can treat equivocation as omission, and are left to ensure Eventual Delivery.
State-based CRDTs gossip their current state regularly and unsolicitedly to all other replicas.
As their current state indirectly contains all updates they have received and merged before, updates not sent directly to a specific replica will eventually reach that replica indirectly via correct replicas.
\qed
\end{proof}

\subsection{Operation-based CRDTs and Equivocation Tolerance}

While state-based CRDTs need few assumptions to work in Byzantine environments with any number of faults, only a subset of operation-based CRDTs can do so, and yet they need a number of additional assumptions.
Specifically, for equivocation tolerance, all potentially conflicting operations have to be locally detected as invalid.
To avoid inconsistencies due to equivocation, we require that valid operations have an \emph{inherent identity}:
Any two operations that are the same as defined by the protocol need to lead to the same outcome when applied.
Inherent identity can be ensured by deriving the identity of an operation from its content, either by comparing the full content, or by verifiable unique identifiers gained via content addressing:
%from the content of the operation itself, which can be ensured by giving operations no identity apart from their content itself.
%, e.g., the operations of a Grow-only Set CRDT.
%Inherent identity can also be ensured with verifiable unique identifiers, e.g., via content addressing:
the identifier of an operation is the hash of its content.
Thereby, identical operations are not applied twice when received twice.

Operation-based CRDTs require that non-commutative operations are applied in causal order.
To avoid inconsistencies due to equivocation on operation ordering, we also require that valid operations have an \emph{inherent ordering}:
Either all operations are commutative and no ordering is needed, or the datatype semantics inherently records a happened-before relation, i.e., the potential causal order in the datatype payload.
If done via content addressing, Byzantine attackers cannot tamper with the happened-before relation, as hashes verifiably prove that an operation referenced by another operation via its hash has happened before the other operation.

\begin{theorem}
\label{th:operation-based-equivocation-tolerance}
Operation-based CRDTs that provide inherent identity and inherent ordering of operations are equivocation-tolerant.
\end{theorem}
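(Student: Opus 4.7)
The plan is to mirror the structure of the proof of Theorem~\ref{th:state-based-equivocation-tolerance}: establish that (i) validity of an operation is locally verifiable, (ii) any two valid operations are conflict-free and can thus be applied without coordination, and (iii) an equivocation consisting of $d$ distinct valid operations reduces to $d$ independent omission events. In the operation-based setting the role that the join-semilattice plays for state-based CRDTs will be taken over jointly by inherent identity and inherent ordering.

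For local validity I would argue that both properties are locally checkable: a replica can recompute the content hash to verify the claimed identifier, and can check that every causal predecessor referenced in the payload is itself a known, valid operation. Because inherent identity derives an operation's identifier from its content, two deliveries of the same operation are recognized as equal and are not applied twice; this plays the role that idempotence of the merge plays in the state-based case.

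For conflict-freedom I would split on the inherent ordering assumption. If every operation commutes with every other, conflict-freedom is immediate from the underlying CRDT definition. Otherwise the content-addressed happened-before relation equips each operation with a verifiable, tamper-evident causal predecessor set, and honest replicas apply operations in an order respecting this DAG. An equivocator that emits two distinct valid operations on top of the same prefix merely introduces two concurrent branches in the DAG; by the standard op-based CRDT design, concurrent operations must commute with one another, so honest replicas converge regardless of the order in which they deliver these branches. Because predecessors are referenced by hash, no Byzantine replica can retroactively rewrite the happened-before relation of any already-delivered operation.

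The main obstacle I expect is making precise why the commutativity-of-concurrent-operations requirement built into op-based CRDTs continues to apply to an equivocated pair. The argument has to be that such a pair looks, from the perspective of any honest replica, exactly like a pair of concurrent operations produced by two correct replicas, and therefore cannot introduce conflicts that the underlying CRDT design was not already required to handle. Once that is in place, step (iii) is immediate: the $d$ equivocated operations have $d$ distinct inherent identities, each replica delivers some subset, inherent identity suppresses duplicates, and what remains is exactly $d$ omission events, which are the residual threat to Eventual Delivery and can be addressed by a lemma analogous to Lemma~\ref{lm:state-based-Byzantine-tolerance}.
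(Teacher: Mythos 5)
Your proposal is correct and follows the same overall decomposition as the paper: identify the two equivocation surfaces (operation identity and operation ordering), argue that inherent identity and inherent ordering close them and are locally verifiable, and conclude that any remaining equivocation collapses to $d$ independent omissions. Where you genuinely diverge is in how you discharge the conflict-freedom step. The paper argues by contraposition: any equivocation that produced a permanently inconsistent state would have to break inherent identity or inherent ordering, and when these are realized by content addressing that breakage reduces to a hash collision. You instead argue directly, by showing that a pair of equivocated valid operations is indistinguishable, from any correct replica's viewpoint, from a pair of concurrent operations issued by two correct replicas, so the op-based CRDT's existing requirement that concurrent operations commute already covers them. Your route is arguably the more illuminating one, since it explains \emph{why} no new conflicts can arise rather than merely asserting that they would violate an assumption; it also makes explicit the deduplication role of inherent identity (the analogue of join idempotence) that the paper leaves implicit. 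What the paper's route buys in exchange is a concrete cryptographic anchor: the impossibility claim bottoms out in collision resistance of the hash function, which your version only touches in passing when you note that hash references are tamper-evident. The obstacle you flag yourself --- justifying that an adversarially crafted valid operation cannot behave worse than an honestly produced concurrent one --- is real, but it is resolved by the standing assumption stated just before the theorem, namely that validity is defined so that all potentially conflicting operations are locally detectable as invalid; you should cite that assumption explicitly rather than leave the step as an open worry.
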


\begin{proof}
Without the above requirements, operations of operation-based CRDTs contain metadata that can introduce conflicts, e.g., a non-unique operation identifier or wrong happened-before relation.
Inherent identity and inherent ordering prevent those conflicts and thereby ensure that equivocations can be treated as omissions.
However, inherent identity and ordering must be locally verifiable, which is the case for all explained mechanisms.
An equivocation that breaks SEC and leads to a permanent inconsistent state must either break inherent identity or inherent order, breaking an assumption of their mechanisms: if ensured by hashing, the equivocation can be reduced to a hash collision.
\qed
\end{proof}

Without periodic gossiping of state-based CRDTs, operation-based CRDTs need to be able to handle omissions to ensure Eventual Delivery.
\emph{Omission Handling} can rely on a payload-recorded happened-before relation to detect missing elements in the causal order.
Using content addressing, those missing elements can be re-requested verifiably from other replicas.
Alternatively, CRDTs can periodically gossip the set of all received operations.
The gossiping approach can be formalized and made more efficient through the happened-before relation and hash chaining~\cite{kleppmann2020bec}.
We note that this approach essentially uses a state-based set CRDT to synchronize all operations, benefiting from the Byzantine tolerance of all state-based CRDTs shown in Lemma~\ref{lm:state-based-Byzantine-tolerance}.

\begin{lemma}
\label{lm:operation-based-Byzantine-tolerance}
Omission-handling, equivocation-tolerant operation-based CRDTs ensure Strong Eventual Consistency for all correct replicas in an environment with $n$ replicas of which an arbitrary number $f$ exhibit Byzantine faults, i.e., have fault tolerance $n > f$.
\end{lemma}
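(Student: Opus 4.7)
The plan is to use Theorem~\ref{th:operation-based-equivocation-tolerance} to reduce the Byzantine case to an omission case, and then verify each of the three SEC ingredients (Strong Convergence, Eventual Delivery, Termination) separately, mirroring the structure of the proof of Lemma~\ref{lm:state-based-Byzantine-tolerance}. Concretely, I would first invoke equivocation tolerance to argue that each variant produced during an equivocation is, by inherent identity, a distinct valid operation rather than a conflicting duplicate, so from the perspective of correct replicas the only remaining faulty behaviour is that some variants were delivered only to a strict subset of replicas, i.e., omission.

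For Strong Convergence, I would observe that inherent identity guarantees correct replicas agree on which operations are the same, while inherent ordering guarantees they apply them in equivalent causal sequences; hence any two correct replicas that have delivered the same set of operations end up in equivalent states, independent of the Byzantine origin of those operations. For Eventual Delivery, I would appeal to the omission-handling assumption and spell out the two mechanisms already described in the text: either a payload-recorded, content-addressed happened-before relation that lets a correct replica detect causal gaps and verifiably re-request the missing operations from peers, or periodic gossip of the operation set, which is effectively a state-based set CRDT and therefore inherits Byzantine tolerance directly from Lemma~\ref{lm:state-based-Byzantine-tolerance}. Together with the assumption that the correct replicas form a connected component in the communication graph on authenticated channels, this ensures any operation held by at least one correct replica eventually propagates to all of them. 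Termination follows because equivocation tolerance removes the need for any coordinated detection or remediation, so method executions remain purely local and therefore finite as in the crash-fault setting.

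The main obstacle I expect is the subtle interplay between inherent identity and Eventual Delivery under equivocation: when a Byzantine replica equivocates into $d$ variants, each variant becomes its own identifiable operation, so Eventual Delivery must deliver all variants held by any correct replica, not just some canonical representative. The argument must therefore show that the omission-handling mechanism treats each equivocated variant as an independent object to be synchronized, that the happened-before graph remains well-defined while different correct replicas temporarily hold different siblings, and that content-addressed references stay verifiable even when a referenced predecessor was only ever disseminated to a small subset. Once these points are carefully established, combining them with Theorem~\ref{th:operation-based-equivocation-tolerance} and Lemma~\ref{lm:state-based-Byzantine-tolerance} yields SEC for any number $f < n$ of Byzantine replicas.
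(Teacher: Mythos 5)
Your proposal follows essentially the same route as the paper: invoke Theorem~\ref{th:operation-based-equivocation-tolerance} to reduce equivocation to omission, then rely on the previously described omission-handling mechanisms (causal-gap re-requests via content addressing, or gossip of the operation set inheriting from Lemma~\ref{lm:state-based-Byzantine-tolerance}) to secure Eventual Delivery. The paper's own proof is only two sentences long, so your more detailed decomposition into the three SEC components and your observation that each equivocated variant must be synchronized as an independent object are a faithful (and somewhat more careful) elaboration of the same argument rather than a different one.
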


\begin{proof}
Due to Theorem~\ref{th:operation-based-equivocation-tolerance}, we can treat equivocation as omission.
Using one of the explained omission handling mechanisms, equivocation-tolerant operation-based CRDTs can ensure that omitted operations are eventually delivered.
\qed
\end{proof}

\section{Discussion and Conclusion}

We showed that some CRDTs can be used in Byzantine environments with an arbitrary number of faults by leveraging their equivocation tolerance.
We now discuss how these CRDTs can be brought into practice.

In classical CRDT use cases, all parties have full permissions to perform any CRDT update, including the deletion of current states.
Thus, deleting is not a Byzantine act.
Therefore, when dealing with CRDTs in Byzantine environments, the focus is typically on grow-only CRDTs.

In environments with byzantine majorities, a CRDT with non-commutative operations has to record the happened-before relation in the payload to ensure the causal order independently of the broadcast order.
The only way known to us to ensure the causal order in a locally verifiable, Byzantine-tolerant way is to use hash chaining.
The happened-before relation being a partial order inherently leads to a directed, acyclic graph of all operations.
To efficiently ensure Eventual Delivery, it is natural to employ the happened-before relationship recorded in the graph by re-requesting missing parent operations and only gossiping childless operations.
This line of thought leads us to the following conjecture:

\begin{conjecture*}
A hash-chained directed acyclic graph as described in~\cite{kleppmann2020bec,jacob2021analysis} is the only operation-based CRDT with non-commutative operations that provides SEC
% for all correct replicas
for any number of Byzantine faults, i.e., has fault tolerance $n > f$.
\end{conjecture*}

State-based CRDTs are easy to deploy in Byzantine environments, as shown in Lemma~\ref{lm:state-based-Byzantine-tolerance}, based on their unconditional equivocation tolerance.
However, the identity of updates gets lost since only states are propagated in the system.
It is not possible to reconstruct the update that led to a new state, which makes it impossible to prove which replica performed which CRDT updates and whether it was allowed to do so.
Hence, access control on the different access and update methods of state-based CRDTs, giving different permissions to different participating replicas, is impossible to enforce.
This makes state-based CRDTs suitable for decentralized systems like the Newsgroup system where any user can write new articles and reply to old ones.
The current Newsgroup system is susceptible to equivocation, as users are trusted to not assign the same identifier to different articles.
If based on an equivocation-tolerant CRDT, SEC could be guaranteed.

In contrast to state-based CRDTs, as shown in Lemma~\ref{lm:operation-based-Byzantine-tolerance}, operation-based CRDTs require additional properties to provide SEC in Byzantine environments.
However, with operation-based CRDTs, the original caller of an update method can be determined and verified with authentication mechanisms like digital signatures.
Therefore, operation-based CRDTs provide the necessary prerequisites for access control, which makes them easier to deploy in systems like instant messaging.
In summary, state-based grow-only CRDTs can play out their strengths in public, permissionless Byzantine systems, while operations-based CRDTs have more restrictions, but allow for permissions with finer granularity of replicas.

In other fields, equivocation is handled not by tolerance but by prevention.
In Byzantine fault-tolerant agreement protocols, \emph{non-equivocation} can be fulfilled by preventing Byzantine processes from creating two valid messages with the same identifier when identifiers are created using monotonic counters that are located inside trusted hardware components \cite{clement2012limited}.
In distributed ledger technologies like blockchains, equivocation usually means creating a fork.
While it is possible to mine two blocks that have the same hash value, the probability that the resulting branches co-exist is very low due to probabilistic leader election mechanisms like proof of work.

In this brief announcement, we analyzed the reasons why and under which circumstances a subclass of CRDTs can be moved from the crash fault model to a Byzantine fault model using the notion of equivocation.
We showed that even when faced with an arbitrary number of Byzantine faults, this subclass can keep the characteristic traits of CRDTs, like efficiency, low coordination effort, and Strong Eventual Consistency.

% Format Requirements: http://www.cse.chalmers.se/~elad/SSS2021/Cfp.html

%
% ---- Bibliography ----
%
\bibliographystyle{splncs04}
\bibliography{paper}
\end{document}